\newcommand{\Cset}   {\ensuremath{\mathbb{C}}}
\newcommand{\CN}     {\ensuremath{\mathcal{CN}}}
\newcommand{\hb}     {\ensuremath{\mathbf{h}}}
\newcommand{\vb}     {\ensuremath{\mathbf{v}}}
\newcommand{\xb}     {\ensuremath{\mathbf{x}}}
\newcommand{\wb}     {\ensuremath{\mathbf{w}}}
\newcommand{\zerob}  {\ensuremath{\mathbf{0}}}
\newcommand{\Vb}     {\ensuremath{\mathbf{V}}}
\newcommand{\Gb}     {\ensuremath{\mathbf{G}}}
\newcommand{\taub}   {\ensuremath{\bm{\tau}}}
\DeclareMathOperator*{\argmax}{arg\,max}
\DeclareMathOperator*{\st}{s.t.}
\DeclareMathOperator*{\Tr}{Tr}
\DeclareMathOperator*{\E}{\mathbb{E}}
\newtheorem{Remark}{Remark}
\newtheorem{lemma}[Remark]{Lemma}
\begin{document}
\title{Joint Beamforming Design and Time Allocation for Wireless Powered Communication Networks}
\author{Qian Sun, Gang Zhu, Chao Shen,~\IEEEmembership{Member,~IEEE,} Xuan Li and Zhangdui Zhong
\thanks{The authors are with the State Key Laboratory of Rail Traffic Control and Safety, Beijing Jiaotong University, Beijing, China 100044.
Qian Sun and Chao Shen are also with the State Key Laboratory of Integrated Services Networks, Xidian University, Xi¡¯an, China
(e-mail: \{12120137, gzhu, shenchao, 12120099, zhdzhong\}@bjtu.edu.cn).
Chao Shen is the corresponding author.
The authors would like to thank Prof. Tsung-Hui Chang for his valuable suggestions.
}
}
\markboth{Submitted to IEEE Communications Letters, March 2014}{Submitted to IEEE Communications Letters, March 2014}
\maketitle
\begin{abstract}
This paper investigates a  multi-input single-output (MISO) wireless powered communication network (WPCN) under the protocol of harvest-then-transmit.
The power station (PS) with reliable power supply can replenish the passive user nodes by wireless power transfer (WPT) in the downlink (DL), then each user node transmits independent information to the sink by a time division multiple access (TDMA) scheme in the uplink (UL).
We consider the joint time allocation and beamforming design to maximize the system sum-throughput.
The semidefinite relaxation (SDR) technique is applied to solve the nonconvex design problem. The tightness of SDR approximation, thus the global optimality, is proved. This implies that only one single energy beamformer is required at the PS.
Then a fast semi-closed form solution is proposed by exploiting the inherent structure.
Simulation results demonstrate the efficiency of the proposed algorithms from the perspectives of time complexity and information throughput.
\end{abstract}
\begin{IEEEkeywords}
Sum-throughput maximization, energy harvesting, time allocation, beamforming, semidefinite relaxation
\end{IEEEkeywords}
\section{Introduction}
Recently, great interest has been drawn in energy harvesting from the radio-frequency (RF) signals in energy constrained wireless systems.
Due to the large-scale fading of RF signals, it is of great importance to carefully design the systems to improve the efficiency and performance by, e.g., the multiple-antenna beamforming.

Since RF signals can carry energy and information at the same time, lots of research works have emerged on the topic of {\it simultaneous wireless information and power transfer} (SWIPT) in recent years, e.g., \cite{Varshney,Grover,Ho,Lidd2013,ChaoShen2013,Lee2014a}, and references therein. The power allocation and/or beamforming designs were investigated under the point-to-point, broadcasting or relay channels with single-input single-output (SISO), multiple-input single-output (MISO) or multiple-input multiple-output (MIMO) configurations. The tradeoff between the energy signal and interference signal is the key point of SWIPT systems.

Besides SWIPT, there is another research topic on energy harvesting, referred to as {\it wireless powered communication network} (WPCN). In WPCN, the wireless systems are activated by the energy  via wireless power transfer (WPT), and then the wireless system is operated for wireless information transfer (WIT).
A harvest-then-transmit scheme was studied in a single-antenna scenario \cite{Ju}, where the sensors harvest energy broadcasted by the power station (PS) in the downlink (DL) and then send their independent information to the sink in the uplink (UL).
The performance of WPCN system was also analyzed from the network perspective \cite{Lee2012}.

In this letter, we consider a MISO WPCN scenario with the harvest-then-transmit protocol, and investigate the joint beamforming design and time allocation to maximize the sum-throughput.
Firstly, the semidefinite relaxation (SDR) technique is used to resolve the nonconvexity issue. The tightness of SDR approximation is proved, which implies that the global optimality can be achieved and only one energy beamformer is required for WPT. Then, a fast semi-closed form solution is developed by fully exploiting the inherent structure, which can reduce the implementation complexity significantly.

This rest of the paper is organized as follows.
The system model and the formulation of the joint time allocation and beamforming design are presented in Sec. \ref{Sec2}.
Then in Sec. \ref{Sec3}, we propose a fast semi-closed form solution to attain the global optimality with very low complexity.
Sec. \ref{Sec5} presents the simulation results, and Sec. \ref{Sec6} concludes the paper.
\section{System Model and Problem Formulation\label{Sec2}}
\begin{figure}[!t]
  \centering
  \includegraphics[width=.5\linewidth]{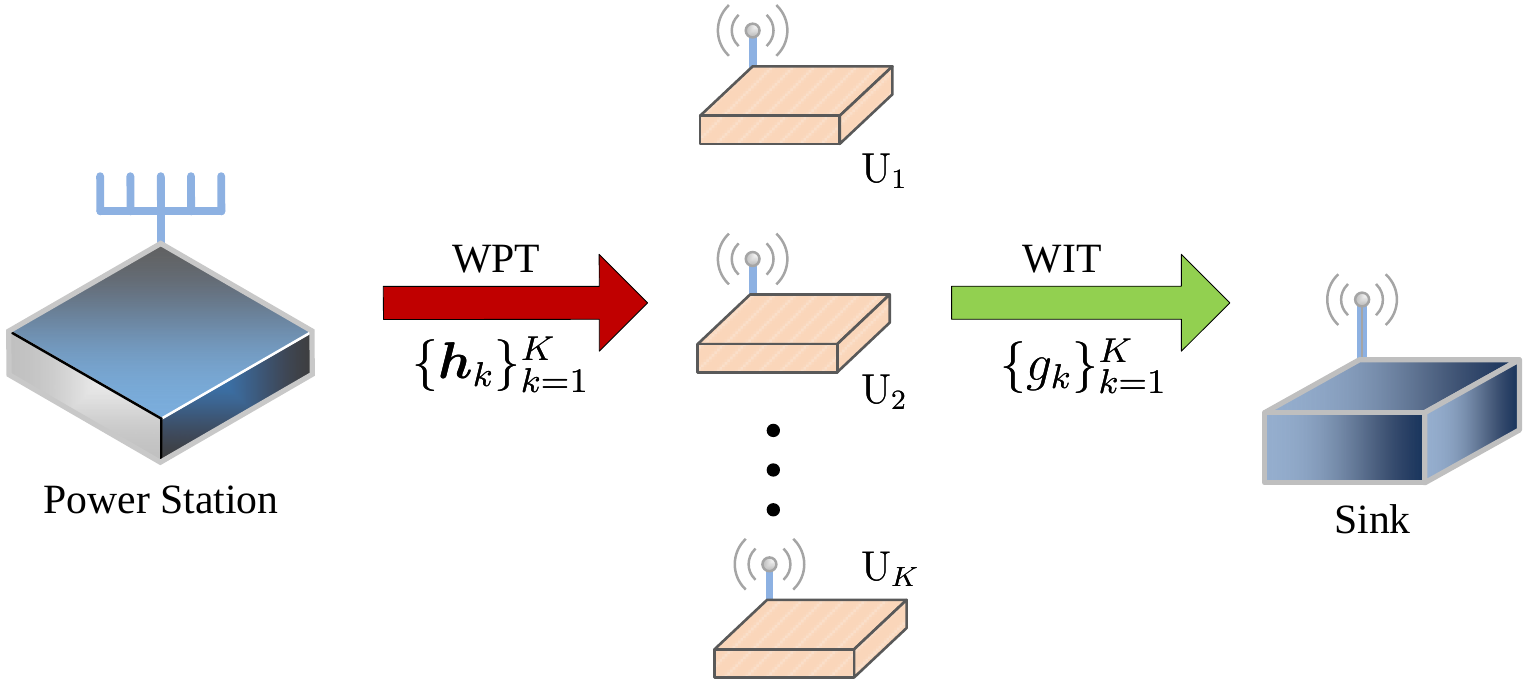}
  \caption{A MISO wireless powered communication network with a $N_t$-antenna power station, $K$ passive single-antenna user nodes and a sink node.}\label{fig:1}
\end{figure}

Consider a MISO WPCN illustrated in Fig. \ref{fig:1}, which consists of an $N_t$-antenna power station, a single-antenna sink and $K$ single-antenna user nodes, denoted by U$_k$ for $k\!=\!1,\ldots,K$. The network operates in a time division multiple access (TDMA) fashion. Assume that the frame duration is normalized to be unit. At the first $\tau_0\in[0,1]$ fraction of time, the PS broadcasts  power wirelessly to the $K$ user nodes in the DL. Then in the UL each user, say, U$_k$ for all $k$, sends its independent information to the sink one by one with $\tau_k\in[0,1]$ fraction of time, and by the energy harvested at the initial slot. The total time constraint reads $\sum_{k=0}^K \tau_k\leq 1$.

Suppose that the PS shall design $K$ WPT beamformers $\{\wb_k\}_{k=1}^K$ for $K$ users. Thus the transmitted signal at the PS can be expressed as
\begin{align}
   \xb(t)&=\sum_{k=1}^{K}\wb_ks_k(t),\label{P02:a}
\end{align}
where $\wb_k\in\Cset^{N_t}$ denotes the energy beamforming vector, and $s_k(t)$ is the Gaussian WPT signal with $s_k(t)\sim\CN(0,1)$.
The transmission power of the power station is limited by $P_{\max}$, i.e.,
\begin{align}
  \sum_{k=1}^{K}||\wb_k||^2_2&\leq P_{\max}.\label{P03:a}
\end{align}

Assume that both DL and UL channels are quasi-static flat-fading, denoted by $\hb_k\in\Cset^{N_t}$ and $g_k\in\Cset$, respectively, for all $k$. Further assume that the PS has perfect knowledge of all channel state information (CSI).
In the scenarios of WPT, it is reasonable to assume that the noise power is negligible compared with the power signal. Hence, the user node can harvest the energy from this RF signal with the amount
\begin{align}
   E_k&\approx\xi_k\tau_0\E\left(|\hb_k^H\xb(t)|^2\right)\\
   &=\xi_k\tau_0\E\Big|\hb_k^H\sum_{i=1}^K\wb_is_i(t)\Big|^2\\
   &=\xi_k\tau_0\sum_{i=1}^K\left|\hb_k^H\wb_i\right|^2,\label{P04:a}
\end{align}
where $\xi_k\in(0,1)$ accounts for the energy harvesting efficiency at user node $k$, for $k=1,\cdots,K$, and the noise power is ignored.
Suppose that the passive user nodes are powered only by the energy harvested from WPT,
and all harvested energy is used for its information transmission. Therefore, its average transmit power $P_k$ within the $\tau_k$ fraction of time  is given by
\begin{align}
    P_k =\frac{E_k}{\tau_k} = \frac{\tau_0}{\tau_k}\xi_k\sum_{i=1}^K\left|\hb_k^H\wb_i\right|^2, ~\forall k.\label{P05:a}
\end{align}

Let $t_k\sim \CN(0,P_k)$ be the signal transmitted by U$_k$. Then the received signal at the sink node in the $k$th UL slot  can be written by
\begin{align}
    r_k =g_kt_k+ z_k, k = 1,\cdots, K,\label{P06:a}
\end{align}
where $z_k\sim\CN(0,{\sigma}_k^2)$ represents the Gaussian noise at the sink node.
Thereby, the achievable throughput of node $k$ in bits/second/Hz (bps/Hz) follows
\begin{subequations}
  \begin{align}\!\!
    R_k(\bm {\tau},\{\wb_k\})
        &=\tau_k\log_2\left(1+\frac{|g_k|^2P_k}{\Gamma\sigma_k^2}\right)\\
        &=\tau_k\log_2\bigg(1+\frac{\gamma_k\tau_0}{\tau_k}\sum_{i=1}^K\left|\hb_k^H\wb_i\right|^2\bigg),\forall k,\label{P07:a}
  \end{align}
\end{subequations}
where ${\bm {\tau}}\triangleq[\tau_0,\tau_1, \cdots, \tau_K]^T$, and $\Gamma$ represents the signal-to-noise ratio (SNR) gap from the additive white Gaussian noise (AWGN) channel capacity due to a practical modulation and coding scheme used.
In addition, $\gamma_k$ is given by
\begin{align}
    \gamma_k =\frac{\xi_k |g_k|^2}{\Gamma\sigma_k^2},~ k = 1,\cdots, K.\label{P08:a}
  \end{align}

In this letter we consider the criterion of sum-throughput maximization. And thus the joint time allocation and beamforming design can be formulated as
\begin{subequations}\label{IddP1}
  \begin{align}
    \max_{\taub,\{\wb_k\}_{k=1}^K}&\sum_{k = 1}^K \tau_k\log_2\left(1+\frac{\gamma_k\tau_0}{\tau_k}\sum_{i=1}^K\left|\hb_k^H\wb_i\right|^2\right)\label{P1:a}\\
        \st~&\tau_k\geq 0, k=0,\cdots, K,~~\sum_{k = 0}^K \tau_k\leq 1,\label{P1:c}\\
        &\sum_{k = 1}^K ||\wb_k||_2^2\leq P_{\max},\label{P1:d}
  \end{align}
\end{subequations}
which is nonconvex due to the coupling of $\{\tau_k\}$ and $\{\wb_k\}$ in the objective function. However, we will propose an algorithm to solve \eqref{IddP1} to the global optimum in the sequel.

\section{Fast Algorithm with Global Optimality\label{Sec3}}

\subsection{Convex Reformulation of \eqref{IddP1}}
In order to resolve the nonconvexity issue, let us first introduce a set of auxiliary variables $\{\vb_k\}_{k=1}^K$ with $\vb_k=\sqrt{\tau_0}\wb_k$, $\forall k$. Hence, \eqref{IddP1} can be reformulated as
\begin{subequations}\label{IddP1a}
  \begin{align}
    ~~~\max_{\taub,\{\vb_k\}}~~&\sum_{k = 1}^K \tau_k\log_2\left(1+\frac{\gamma_k}{\tau_k}\sum_{i=1}^K\left|\hb_k^H\vb_i\right|^2\right)\\[-2pt]
    \st~~&\sum_{k = 0}^K \tau_k\leq 1,~\tau_k\geq 0,~ k=0,\ldots, K, \label{P2:b}\\[-2pt]
        &\sum_{k = 1}^K ||\vb_k||_2^2\leq \tau_0P_{\max}.\label{P2:d}
  \end{align}
\end{subequations}

One can readily verify that $\tau_0>0$ holds true at the optimal solution, so the optimal $\wb_k$ can always be recovered by solving the problem \eqref{IddP1a}.

Notice that $\sum_{i=1}^K\left|\hb_k^H\vb_i\right|^2=\hb_k^H\left(\sum_{i=1}^K\vb_i\vb_i^H\right)\hb_k$, where $\sum_{i=1}^K\vb_i\vb_i^H$ is a positive semidefinite (PSD) matrix with its rank no greater than $K$.
By leveraging the idea of semidefinite relaxation (SDR) \cite{Luo}, we replace the PSD matrix $\sum_{i=1}^{K}\vb_i\vb_i^H$ with a general-rank matrix $\Vb\succeq\zerob$. Therefore, the nonconvex programming \eqref{IddP1a} can be relaxed to
\begin{subequations}\label{IddP1b}
  \begin{align}
   \max_{\taub,\,\Vb}~~&\sum_{k = 1}^K \tau_k\log_2\Big(1+\frac{\gamma_k}{\tau_k}\Tr\left(\hb_k\hb_k^H\Vb\right)\Big)\label{IddP1b:a}\\[-2pt]
    \st~~&\sum_{k = 0}^K \tau_k\leq 1,~\tau_k\geq 0, ~ k=0,\cdots, K,\label{P3a:c}\\[-1pt]
        &\Tr(\Vb)\leq \tau_0P_{\max},~\Vb\succeq\zerob.
  \end{align}
\end{subequations}

The constraints in \eqref{IddP1b} are linear while the objective function is the sum of perspective functions \cite{Boyd} of the concave function $\log_2\left(1+ \gamma_k\Tr(\hb_k\hb_k^H\Vb)\right)$ for $k=1,\ldots,K$. We then conclude that the objective function, and hence the problem \eqref{IddP1b}, are concave w.r.t. $\taub$ and $\Vb$. Consequently, the global optimal solution can be obtained by any off-the-shelf interior point solver, e.g. {\tt CVX} \cite{cvx}.

It is worthy pointing out that the optimal solution to \eqref{IddP1b}, denoted by $\{\taub^*, \Vb^*\}$,  is unique due to its strict concavity. However, it is not the case for the problem \eqref{IddP1a}. And moreover, from the view of implementation complexity, we remark that 
\begin{Remark}\label{Remark::1}
If $\Vb^*$ is rank-one, i.e., $\Vb^*=\vb^*{\vb^*}^H$, then an optimal and favorable solution of $\{\vb_k\}$ to the problem \eqref{IddP1a} is given by $\vb_k^*=\vb^*$ for $k=1$, and $\zerob$ otherwise. 
\end{Remark}

Remark 1 turns out that only a single energy beam is required for the optimal WPT  if ${\rm Rank}(\Vb^*)=1$.
Indeed, the rank-one optimality can be guaranteed.
\begin{lemma}\label{Lemma::1}
The optimal solution $\Vb^*$  to \eqref{IddP1b} is of rank one.
\end{lemma}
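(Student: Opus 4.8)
The plan is to exploit the convexity of \eqref{IddP1b} and analyze its Karush--Kuhn--Tucker (KKT) system. Since \eqref{IddP1b} is a convex program admitting a strictly feasible point (take any $\tau_k>0$ with $\sum_{k}\tau_k<1$ and $\Vb=\epsilon\Ib\succ\zerob$ for small $\epsilon$), Slater's condition holds, strong duality applies, and the KKT conditions are necessary and sufficient for optimality. I would attach a multiplier $\mu\ge0$ to the power budget $\Tr(\Vb)\le\tau_0P_{\max}$ and a positive semidefinite matrix $\Zb\succeq\zerob$ to the constraint $\Vb\succeq\zerob$, and then work only with the stationarity condition in $\Vb$ and the associated complementary slackness; the conditions in $\taub$ are not needed for the rank argument.

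Differentiating the objective of \eqref{IddP1b} with respect to $\Vb$ (each user enters only through $y_k\triangleq\Tr(\hb_k\hb_k^H\Vb)$, and only the users with $\tau_k^*>0$ contribute), the $\Vb$-stationarity reads
\[
\Zb^*=\mu^*\Ib-\Ab^*,\qquad \Ab^*\triangleq\sum_{k:\,\tau_k^*>0}\frac{\gamma_k}{(\ln2)\big(1+\tfrac{\gamma_k}{\tau_k^*}y_k^*\big)}\,\hb_k\hb_k^H,
\]
together with complementary slackness $\Tr(\Zb^*\Vb^*)=0$. Because $\Zb^*$ and $\Vb^*$ are both PSD, this forces $\Zb^*\Vb^*=\zerob$, hence $\Range(\Vb^*)\subseteq\mathrm{null}(\Zb^*)$. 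The coefficients defining $\Ab^*$ are strictly positive, so $\Ab^*\neq\zerob$ and $\lambda_{\max}(\Ab^*)>0$.

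Next I would pin down $\mu^*$. The objective is strictly increasing in each $y_k$, so $\Vb^*\neq\zerob$ at the optimum (otherwise the objective is zero, which is clearly suboptimal); consequently $\Zb^*$ must be singular. Since $\Zb^*=\mu^*\Ib-\Ab^*\succeq\zerob$ gives $\mu^*\ge\lambda_{\max}(\Ab^*)$, singularity forces $\mu^*=\lambda_{\max}(\Ab^*)>0$ (so the power budget is active, $\Tr(\Vb^*)=\tau_0^*P_{\max}$), and $\mathrm{null}(\Zb^*)$ coincides with the eigenspace of $\Ab^*$ associated with its largest eigenvalue. Combining this with $\Range(\Vb^*)\subseteq\mathrm{null}(\Zb^*)$ yields the bound
\[
\rank(\Vb^*)\le \dim\,\mathrm{null}(\Zb^*)=\text{multiplicity of }\lambda_{\max}(\Ab^*).
\]

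The main obstacle is therefore to show that $\lambda_{\max}(\Ab^*)$ is a \emph{simple} eigenvalue, for then $\mathrm{null}(\Zb^*)$ is one-dimensional and $1\le\rank(\Vb^*)\le1$, establishing the claim. I expect this to be the delicate step: $\Ab^*$ is a positive-weighted sum of the rank-one dyads $\hb_k\hb_k^H$, whose leading eigenvalue is simple whenever the channels are in general position --- which occurs with probability one for continuously distributed fading --- so the rank-one conclusion holds generically (and then $\Vb^*$ is also unique). For the measure-zero degenerate configurations in which $\lambda_{\max}(\Ab^*)$ is repeated, one must argue separately that a rank-one optimizer still exists, e.g. by restricting to the leading eigenspace and applying a rank-reduction/purification step that preserves the optimal energies $\{y_k^*\}$; this is where the argument requires the most care. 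Once rank one is in hand, Remark \ref{Remark::1} immediately recovers the single-beamformer solution of \eqref{IddP1a}.
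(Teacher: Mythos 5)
Your KKT route is genuinely different from the paper's proof, which argues algorithmically: for fixed $\taub$ the inner problem \eqref{IddP1c} is attacked by successive convex approximation, each subproblem being a linear semidefinite program that admits a rank-one solution by \cite[Lemma 3.1]{Huang}, so the limit point is rank one. Your dual/complementary-slackness analysis is a legitimate alternative, and your derivation up to $\rank(\Vb^*)\le{}$multiplicity of $\lambda_{\max}(\Ab^*)$ is sound. But the gap you flag at the end is a real one, and it is doubly problematic. First, the lemma is stated unconditionally, whereas your argument only covers the case where $\lambda_{\max}(\Ab^*)$ is simple; your fallback sketch for the degenerate case does not obviously work, since a standard rank-reduction/purification step that preserves all $K$ energies $y_k^*$ plus the trace only guarantees a solution of rank $r$ with $r(r+1)/2\le K+1$, i.e., $r\approx\sqrt{2K}$, not $r=1$. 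Second, the genericity claim itself is shaky: the weights in $\Ab^*$ are endogenous --- they depend on the unknown optimal $(y_k^*,\tau_k^*)$ --- so ``channels in general position'' does not directly imply that the top eigenvalue of this solution-dependent matrix is simple with probability one.

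The irony is that the ingredient you explicitly discarded --- the optimality conditions in $\taub$ --- is exactly what closes the gap. Condition \eqref{P21:a} forces $\gamma_k y_k^*/\tau_k^*={\rm SNR}$ for all served users, so every weight in your $\Ab^*$ equals $\gamma_k/\bigl(\ln 2\,(1+{\rm SNR})\bigr)$, and hence $\Ab^*\propto\Gb\Gb^H$ with $\Gb=[\sqrt{\gamma_1}\hb_1,\cdots,\sqrt{\gamma_K}\hb_K]$, a fixed channel-only matrix. Your complementary-slackness step then places $\Range(\Vb^*)$ inside the principal eigenspace of $\Gb\Gb^H$, and one can verify directly that the rank-one candidate $\Vb=\tau_0 P_{\max}{\bm\upsilon}{\bm\upsilon}^H$ of \eqref{Iddsignalsemi-closed:a} (with ${\bm\upsilon}$ any unit principal eigenvector), combined with the $\tau_k$ from \eqref{P22:a}, satisfies the full KKT system --- yielding a rank-one global optimum for every channel realization, degenerate or not. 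Note also that when $\lambda_{\max}(\Gb\Gb^H)$ is repeated, higher-rank optima genuinely exist, since the optimal value depends on $\Vb$ only through $\Tr(\Gb\Gb^H\Vb)$ (cf. \eqref{IddsignalCVX}); so the correct unconditional statement --- and all that the paper's algorithm and its own SCA proof actually deliver --- is the \emph{existence} of a rank-one optimal solution, not that every optimizer has rank one.
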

\begin{proof}
The optimal time allocation to the problem \eqref{IddP1b} exists. And for any give ${\taub}$, consider the following problem
   \begin{subequations}\label{IddP1c}
  \begin{align}
    ~~~\max_{\Vb}~~&\sum_{k = 1}^K \tau_k\log_2\Big(1+\frac{\gamma_k}{\tau_k}\Tr(\hb_k\hb_k^H\Vb)\Big)\\
    \st~~&\Tr(\Vb)\leq \tau_0P_{\max},~\Vb\succeq\zerob.
  \end{align}
\end{subequations}
The objective function of \eqref{IddP1c} is convex but nonlinear. Hence, by using the technique of {\it successive convex approximation} \cite{LiWeiChiang,meisam}, the optimal solution can be achieved by solving a series of linear programming, with the form stated below
   \begin{subequations}
  \begin{align}\!\!\!\!
    \Vb_{n+1}\triangleq\argmax_{\Vb}~&\sum_{k = 1}^K \frac{\gamma_k}{1\!+\!\frac{\gamma_k\Tr(\hb_k\hb_k^H\Vb_{n})}{\tau_k}}\Tr(\hb_k\hb_k^H\Vb)\!\\
    \st~&\Tr(\Vb)\leq \tau_0P_{\max}, ~\Vb\succeq\zerob,
  \end{align}
\end{subequations}
where $\Vb_n$ is the optimal solution at the $n$th iteration. Clearly, $\Vb_{n+1}$  is rank-one for all $n$ according to \cite[Lemma 3.1]{Huang}, which completes the proof of the rank-one optimality.
\end{proof}

\subsection{Fast Algorithm to \eqref{IddP1b}}
In view of the potential application scenarios of WPCN, we are particularly interested in developing a fast algorithm design to \eqref{IddP1b} with low complexity. To this end, we will fully exploit the inherent structure of \eqref{IddP1b} in this subsection.

First, it can be verified that the time should be used up at the optimal solution. Then, the optimal UL time allocation $\{\tau_k^*\}_{k=1}^K$ can be expressed as a function of the optimal $\tau_0^*$ and $\Vb^*$. To show this, let us rewrite \eqref{IddP1b} as
\begin{subequations}\label{Problem13}\begin{align}\!\!\!
    \max_{\tau_0\in[0,1],\Vb}&~f(\tau_0,\Vb)\triangleq\left\{
    \begin{array}{cl}\!\!\!\displaystyle
\max_{\{\tau_k\}_{k=1}^K}&\!\displaystyle\!\!\sum_{k = 1}^K \tau_k\log_2\left(\!1+\frac{\gamma_k}{\tau_k}\Tr(\hb_k\hb_k^H\Vb)\!\right)\label{P12:a}\\
  \st~&\!\!\!\tau_i\geq 0, ~\forall i=1,\cdots, K, \\
     &\!\!\!\sum_{i = 1}^K \tau_i= 1-\tau_0,
     \end{array}\!\!\!\right\}\\
     \st~&\Tr(\Vb)\leq \tau_0P_{\max},~\Vb\succeq \zerob.
\end{align}\end{subequations}

Due to the strict concavity of the objective function inside \eqref{P12:a} and thanks to the Jensen's inequality, for any given $\tau_0\in[0,1]$ the optimal $\{\tau_k^*\}_{k=1}^K$ is attained  if and only if
\begin{align}
  \frac{\gamma_k}{\tau_k^*}\Tr(\hb_k\hb_k^H\Vb)={\rm SNR},\forall k,~\sum_{k=1}^K\tau_k^* = 1-\tau_0,\label{P21:a}
  \end{align}
which yields
   \begin{subequations}\label{P22}\begin{align}
    {\rm SNR}&=\frac{\sum_{i=1}^K\gamma_i\Tr(\hb_i\hb_i^H\Vb)}{1-\tau_0},\\
    \tau_k^*&=\frac{(1-\tau_0)\gamma_k\Tr(\hb_k\hb_k^H\Vb)}{\sum_{i=1}^K\gamma_i\Tr(\hb_i\hb_i^H\Vb)}, ~\forall k.\label{P22:a}
  \end{align}\end{subequations}

By substituting (\ref{P22}) into $f(\tau_0,\Vb)$, \eqref{Problem13} reduces to  
\begin{subequations}\label{P3}
  \begin{align}\!\!
  \max_{{0\le\tau_0\leq 1}, \Vb}&~(1-\tau_0)\log_2\left(1+ \frac{\sum_{k=1}^{K}\gamma_k\Tr(\hb_k\hb_k^H\Vb)}{1-\tau_0}\right)\\
  \st~&~\Tr(\Vb)\leq \tau_0P_{\max},~\Vb\succeq \zerob,
  \end{align}
  \end{subequations}
which is equivalent to the convex programming
 \begin{subequations}\label{IddsignalCVX}
  \begin{align}
    \max_{{0\le\tau_0\leq 1}}\max_{\Vb}&~(1-\tau_0)\log_2\left(1+ \frac{\Tr(\Gb\Gb^H\Vb)}{1-\tau_0}\right)\label{P3:a}\\
        \st &~\Tr(\Vb)\leq \tau_0P_{\max},~\Vb\succeq\zerob,
  \end{align}
\end{subequations}
with $\Gb\triangleq[\sqrt{\gamma_1}{\hb}_1,\cdots,\sqrt{\gamma_K}{\hb}_K]$.

For any given $\tau_0$, the inner maximization problem of \eqref{IddsignalCVX} admits 
a closed-form optimal solution \cite[Proposition 2.1]{Ho}
\begin{align}
    \Vb^*=\tau_0P_{\max}{\bm\upsilon}{\bm\upsilon^H}, \label{Iddsignalsemi-closed:a}
\end{align}
where ${\bm\upsilon}$ is the principal eigenvector of $\Gb\Gb^H$.

\begin{algorithm}[!b]
  \caption{Proposed fast algorithm to the problem \eqref{IddP1b}}
  \label{Alg1}
  \begin{algorithmic}[1]
  \STATE  \textbf{Input:} $P_{\max}$ and $\{\hb_k,\gamma_k\}_{k=1}^K$;
  \STATE  Obtain ${\bm\upsilon}$ and $\lambda_{\max}$ by SVD of $\Gb\Gb^H$.
  \STATE  Obtain $\tau_0^*$ from \eqref{IddsignalnoCVX} by golden section search;
  \STATE  Obtain $\{\tau_k^*\}_{k=1}^K$ from \eqref{P22:a} with $\tau_0=\tau_0^*$;
  \STATE  Obtain $\wb_k^*=\frac{1}{\sqrt{\tau_0^*}}{\bm\upsilon}$, for $k=1$, and $\zerob$ otherwise;
  \STATE  {\bf Output:} $\{\tau_k^*\}_{k=0}^K$ and $\{\wb_k^*\}_{k=1}^K$.
  \end{algorithmic}
\end{algorithm}

Thus, by substituting (\ref{Iddsignalsemi-closed:a}) into the problem \eqref{IddsignalCVX}, it follows
\begin{align}
    \max_{{0\le\tau_0\leq 1}}&~(1-\tau_0)\log\Big(1+ \frac{\tau_0}{1-\tau_0}P_{\max}\lambda_{\max}\Big),\label{IddsignalnoCVX}
\end{align}
where $\lambda_{\max}$ is the principal eigenvalue of $\Gb\Gb^H$, and which can be efficiently solved by, e.g., the golden section search.

To summarize, we formalize the procedure of the proposed fast algorithm as Algorithm \ref{Alg1}.

\subsection{Deterministic Signalling for WPT\label{Sec4}}
In the WPT phase, $s_k(t)$ can be a deterministic power signal instead of the Gaussian input. It was shown in \cite{ChaoShen2013,Lee2014a} that the deterministic signalling can improve the performance of SWIPT systems since the interference caused by the WPT signal over the information signal can be cancelled. But the throughput cannot be improved for the system in this letter.

To show this, assume w.l.o.g. that $s_k(t)=1$. Then the harvested energy at user $k$ is given by $E_k^D\! = \!\xi_k\tau_0\big|\hb_k^H\sum_{i=1}^K\wb_i\big|^2$, $\forall k$.
The average transmit power $P_k^D$ and achievable throughput $R_k^D$ are then respectively given by
\begin{subequations}\begin{align}
    P_k^D&=\frac{ E_k^D}{\tau_k} = \frac{\tau_0}{\tau_k}\xi_k\bigg|\hb_k^H\sum_{i=1}^K\wb_i\bigg|^2,~\forall k,\label{P31:a}\\
    R_k^D&=\tau_k\log_2\Bigg(1+\gamma_k\bigg|\hb_k^H\sum_{i=1}^K\wb_i\bigg|^2\frac{\tau_0}{\tau_k}\Bigg),\forall k.
  \end{align}
\end{subequations}

Let $\bar\vb=\sqrt{\tau_0}\sum_{i=1}^K\wb_i$. Then the sum-throughput maximization problem reads
  \begin{subequations}\label{ConP1}
  \begin{align}
    \max_{\taub,\bar\vb}~~&\sum_{k = 1}^K \tau_k\log_2\Big(1+\frac{\gamma_k}{\tau_k}\left|\hb_k^H\bar\vb\right|^2\Big)\label{P4:a}\\
        \st~~&\tau_k\geq 0 ~\forall k,~\sum_{k = 0}^K \tau_k\leq 1,~\left\|\bar\vb\right\|^2\leq \tau_0 P_{\max}.\label{P4:d}
  \end{align}
\end{subequations}
By using the SDR, i.e., relaxing the rank-1 matrix $\bar\vb\bar\vb^H$ with  a general-rank PSD matrix $\bar\Vb$, \eqref{ConP1} can be approximated by a convex problem which is exactly the same to \eqref{IddP1b}.
\begin{Remark}
  In a WPCN system considered in this letter, the deterministic signalling cannot improve the system performance,
  but potentially help to reduce the implementation complexity.
\end{Remark}

\section{Numerical Results \label{Sec5}}
We consider a network as shown in Fig. \ref{fig:1}. The $K$ users are uniformly located in a line with total distance being 10 meters. The PS and sink are placed at the perpendicular bisector of the user array with the vertical distances being $d_p$ and $d_s$ meters, respectively. The PS is equipped with $N_t\!=\!4$ antennas.

The DL channel $\hb_k$ is modelled as
\begin{align}
\hb_k=\sqrt{\frac{K_R}{1+K_R}}\hb_k^{\rm LOS}+\sqrt{\frac{1}{1+K_R}}\hb_k^{\rm NLOS},
  \end{align}
where the Rician factor $K_R=3$, $\hb_k^{\rm NLOS}$ follows the standard Rayleigh fading, and $\hb_k^{\rm LOS}$ is the line of sight (LOS) with the form $\hb_k^{\rm LOS}\!=\![1,e^{j\alpha_k},\ldots,e^{j(N_t-1)\alpha_k}]^T$, $\alpha_k=-\pi\sin(\beta_k)$, and $\beta_k$ being the direction of U$_k$ to PS.
The average power of $\hb_k$ is then normalized by the path loss $10^{-3}{(d^{DL}_k)}^{-\alpha}$,
where $d^{\rm DL}_k$ is the distance between the PS and U$_k$, and $\alpha$ is the path loss exponent.
The UL channel $g_k$ follows i.i.d. Rayleigh fading. Specifically, $|g_k|^2=10^{-3}\rho^2_k{(d^{\rm UL}_k)}^{-\alpha}$, where $\rho_k$ follows the standard Rayleigh fading and $d^{\rm UL}_k$ is the distance between the sink and U$_k$.
The power limit $P_{\max}$ is set to be 30 dBm.
Let $\sigma_k^2=-70$ dBm, $\xi_k = 0.5$ for all $k$, and the SNR gap $\Gamma=9.8$ dB.
The simulation results are averaged over 1000 channel realizations.

\begin{figure}[!t]\centering
  \includegraphics[width=0.55\linewidth]{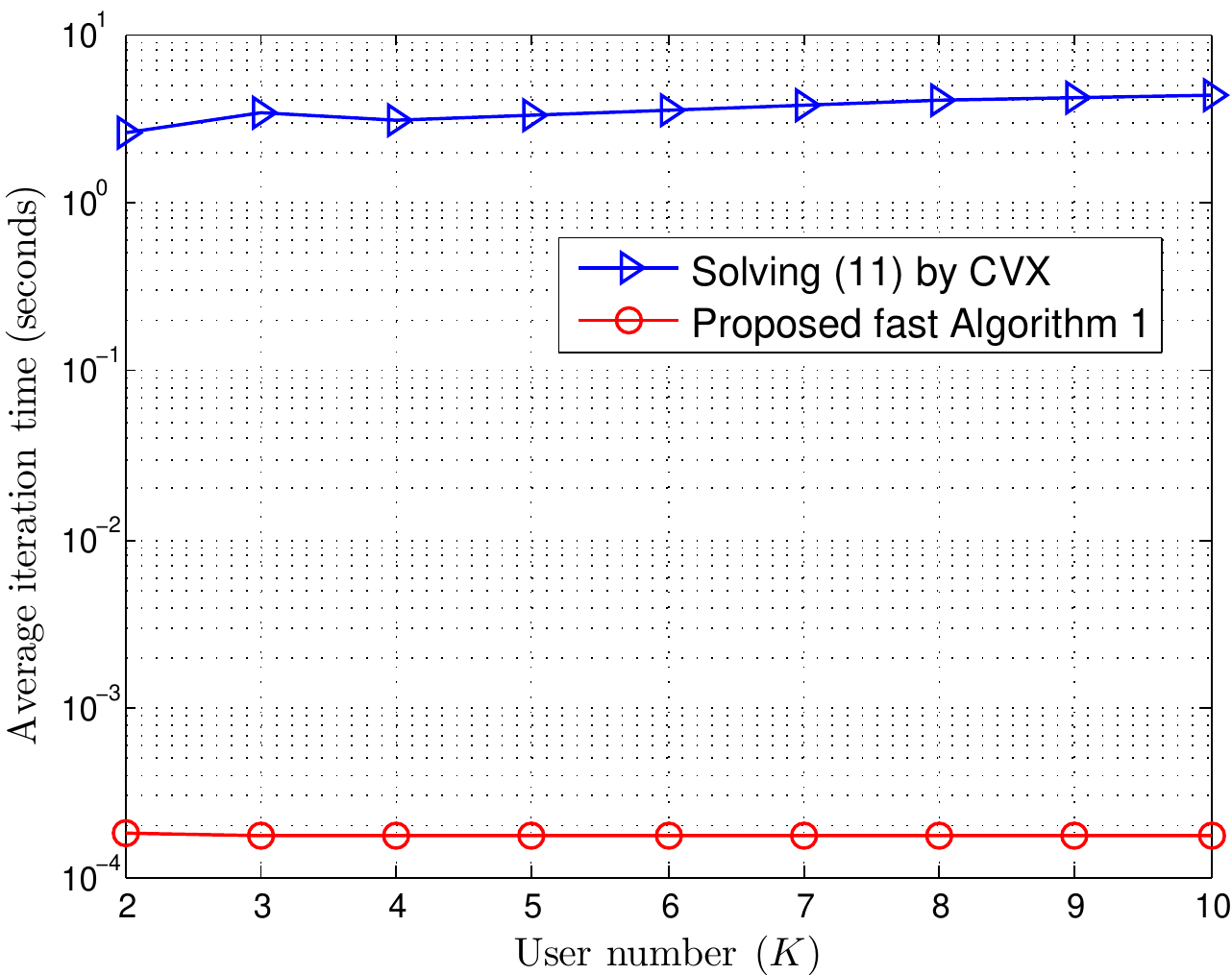}
  \caption{Average iteration time vs. user number, where $d_p\!=\!d_s\!=\!5$ meters.}\label{fig:IddTi}
\end{figure}
\begin{figure}[!t]\centering
   \includegraphics[width=0.55\linewidth]{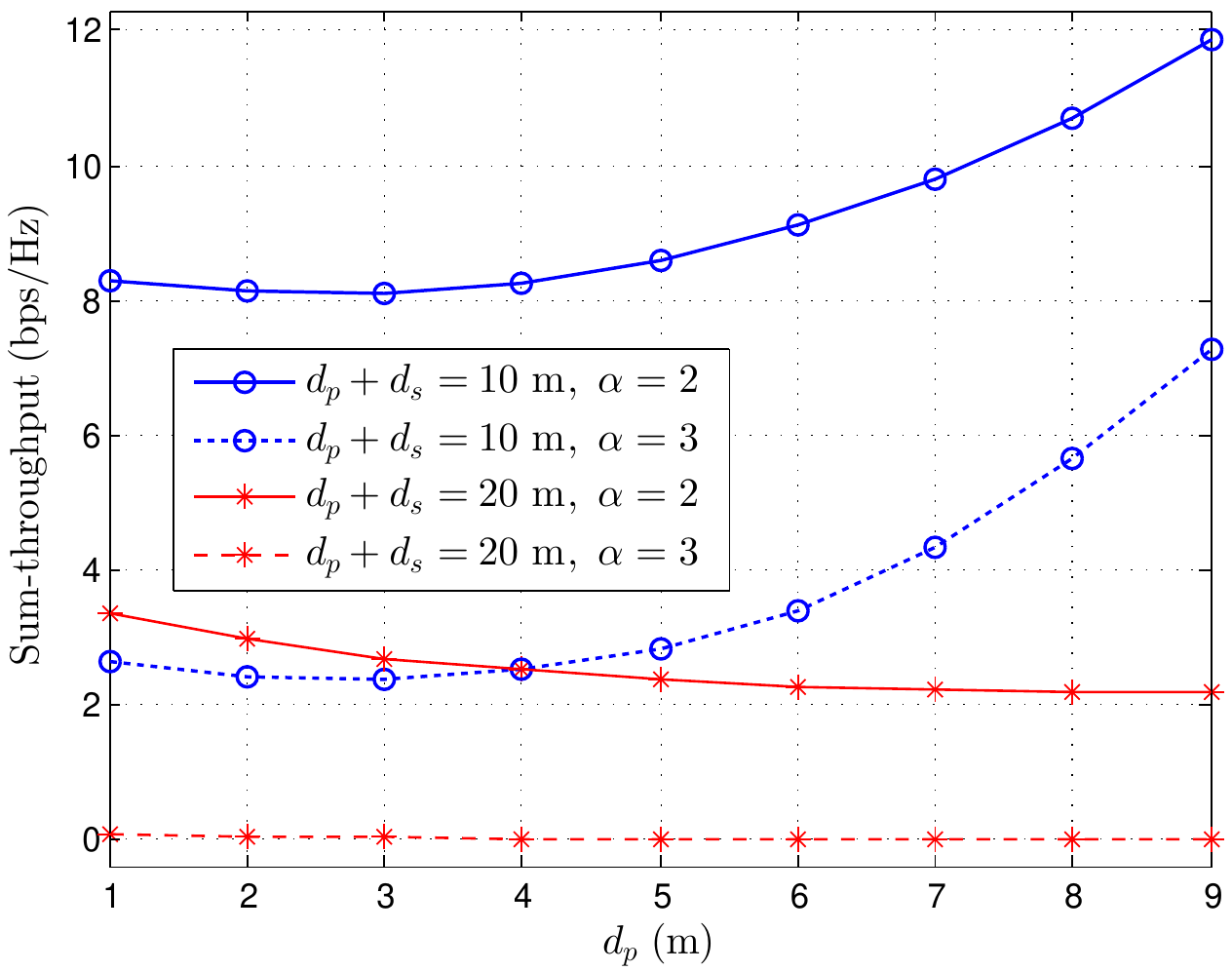}
   \caption{ Sum-throughput vs. $d_p$, where $K\!=\!4$, and $P_{\max}\!=\!30$ dBm.}
   \label{fig:ICTp2}
\end{figure}
Fig. \ref{fig:IddTi} shows the average iteration time vs. the user number $K$ with $d_p\!=\!d_s\!=\!5$ m and $\alpha\!=\!3$. It turns out that the proposed fast algorithm can greatly reduce the time complexity.

Fig. \ref{fig:ICTp2} plots the sum-throughput vs. $d_p$, where $\alpha\!\in\!\{2,3\}$, the PS-sink distance $d_{ps}=d_p+d_s\!\in\!\{10,20\}$ meters and $K\!=\!4$. 
As intuition suggests, the sum-throughput decreases as $\alpha$ or $d_{ps}$ increases.
However, with a given reasonable $d_{ps}$, an interesting thing is that the users should be more close to the sink than PS to achieve high sum-throughput if the PS-sink distance is short; Otherwise, the users should move close to the PS.
This can be interpreted as due to the large-scale fading and energy beamforming transmission at PS.
Hence, there is a tradeoff between the sum-throughput and the node deployment in a practical scenario.

\section{Conclusion\label{Sec6}}
In this letter, we consider the optimal design for a wireless powered communication network. The sum-throughput is maximized by joint time allocation and beamforming. The semidefinite relaxation technique is applied to resolve the nonconvexity issue, and its tightness is proved. A fast algorithm is further proposed to substantially reduce the time complexity. Simulation results demonstrate the effectiveness of the proposed algorithm.

\bibliographystyle{IEEEtran}

\end{document}